\documentclass[10pt]{article}

\usepackage{amssymb}
\usepackage{amsfonts}
\usepackage{amsmath}
\usepackage{amscd}
\usepackage[all,cmtip]{xy}
\usepackage{amsthm}
\usepackage{setspace}
\usepackage{enumerate}
\usepackage{geometry}
\usepackage{tcolorbox}
\usepackage{tikz}
\usepackage{hyperref}

\hypersetup{
allbordercolors=blue, pdfborder=0 0 1.3}

\theoremstyle{plain}
\newtheorem{thm}{Theorem}

\newtheorem*{cor}{Corollary}

\theoremstyle{definition}
\newtheorem*{remark}{Remark}
\newtheorem{condition}{Condition}
\newtheorem{example}{Example}

\newcommand{\ZZ}{{\mathbb Z}}

\title{The KdV vacuum operator and its Katz extension}
\author{Martin T. Luu}

\date{}

\begin{document}

\maketitle

\begin{abstract}
We define a connection on the formal disc that can be used to single out the vacuum of the Drinfeld-Sokolov KdV hierarchy associated to a simple complex finite-dimensional Lie algebra. As a connection, it has a canonical Katz extension from the disc to the sphere.  We express this Katz extension in terms of the Kac coordinates of a suitable Weyl group conjugacy class. As a consequence, we show that the Katz extension has meaning in the context of the integrable hierarchy: It describes an additional symmetry.
\end{abstract}

\section{Introduction}

Let $\mathfrak g$ be a finite-dimensional simple complex Lie algebra.  Kostant initiated in \cite{KOS} the study of cyclic elements $\Lambda$ associated to $\mathfrak g$.  To define them,  let $\{e_{i},f_{i},h_{i}\}_{1\le i \le r}$ be generators of $\mathfrak g$ satisfying the Serre relations. Let $F$ be a non-zero element in the lowest root space with respect to the Cartan algebra $\widetilde{\mathfrak h}$ generated by the $h_{i}$'s. A (classical) cyclic element in $\mathfrak g$ is an element of the form $\sum_{i=1}^{r} e_{i} + F$. 
Analogously, one defines the affine variant to be an element $\Lambda$ in the loop algebra $\widehat{\mathfrak g}=\mathfrak g[z,z^{-1}]$ of the form
$$\Lambda=  \sum_{i=1}^{r} e_{i} + z \cdot F$$
After choosing a complex $d$-dimensional representation $\xi$ of $\mathfrak g$,  the cyclic element $\Lambda$ acts on 
$$|0\rangle :=\mathbb{C}[z]^{d}$$
The notation stems from the relation to Drinfeld-Sokolov integrable hierarches, where $|0\rangle$ describes the vacuum in phase space.  Similarly to the $\Lambda$ action,  the vacuum $|0\rangle$ is stabilized by the linear differential operator 
$$\nabla=\partial_{z}+ \Lambda$$
Operators closely related to $\nabla$ have been studied in various contexts recently: See the work of Frenkel-Gross \cite{FG} related to Langlands correspondences, the work of Masoero-Raimondo-Valeri \cite{MRV} on the ODE/IM correspondence,  as well as the work of Kamgarpour-Sage \cite{KAMS}.  Note that a variant $\partial_{x}+\Lambda$ with $x$ an indeterminate different than the loop variable $z$ is a central object in the Lax operator formulation of the Drinfeld-Sokolov integrable hierarchy associated to $\mathfrak g$.

Our first aim is to show a special relation between $\nabla$ and the vacuum $|0\rangle$.  To make this precise,  consider deformations of $\mathbb{C}[z]^{d}$ inside of $\mathbb{C}(\!(1/z)\!)^{d}$ that are complex subspaces of $\mathbb{C}(\!(1/z)\!)^{d}$ whose projection onto $\mathbb{C}[z]^{d}$ is an isomorphism. Among these, we consider deformations coming from $\mathfrak g$ in the following sense: Let $U$ in $\widehat{\mathfrak g}$ be of the form $\sum_{i<0} U_{i}$ where each $U_{i}$ is of degree $i$ in the principal gradation and also in $z^{-1} \mathfrak g[z^{-1}]$. Then by a $(\mathfrak g, \xi)$-deformation of $|0\rangle $ we mean a subspace of $\mathbb{C}(\!(1/z)\!)^{d}$ of the form $V=\textrm{exp}(U) |0\rangle$ where $U$ is as above and acts via the representation $\xi$. A natural question: Is the vacuum the only point in this collection of spaces that is stabilized by $\partial_{z}+\Lambda$? We show this is indeed the case:  

\begin{thm}
\label{main-theorem}
If $V$ is a $(\mathfrak g, \xi)$-deformation of $|0\rangle $ such that $\nabla \; V \subseteq V$,  then $V=|0\rangle$. 
\end{thm}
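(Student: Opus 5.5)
The plan is to move the condition from $V$ to a single gauge-transformed connection, and then show by a leading-term argument that the gauge parameter vanishes. Since $V=\exp(U)|0\rangle$, the hypothesis $\nabla V\subseteq V$ is equivalent to
$$\exp(-U)\,\nabla\,\exp(U)\;|0\rangle \subseteq |0\rangle .$$
The conjugated operator has the form $\partial_z + A$ with
$$A=\exp(-\mathrm{ad}\,U)(\Lambda)\;-\;\frac{1-\exp(-\mathrm{ad}\,U)}{\mathrm{ad}\,U}\,(\partial_z U),$$
where the sign is fixed by a routine computation. This is an element of $\mathfrak g(\!(1/z)\!)$ acting through $\xi$. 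Because $\partial_z$ already preserves $\mathbb C[z]^d$, the condition reduces to $A\,\mathbb C[z]^d\subseteq \mathbb C[z]^d$. Applying $A$ to the constant vectors shows that the matrix $\xi(A)$ has polynomial entries. Since $\xi$ is faithful on the simple algebra $\mathfrak g$ (after choosing a basis), this gives $A\in\mathfrak g[z]$.

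Next I use the principal gradation, in which $\deg e_i=1$, $\deg z=h$ (the Coxeter number) and $\deg\Lambda=1$. Every correction term in $A-\Lambda$ contains at least one factor of $U$, so it has principal degree $\le 0$. An element $x z^k$ with $k\ge 0$ has degree $\mathrm{ht}(x)+kh$, and this is $\le 0$ only when $k=0$ and $x$ lies in the lower Borel subalgebra $\mathfrak b_-$. Hence the first constraint is
$$A-\Lambda\in\mathfrak b_- \quad\text{(constant in } z\text{)}.$$
This alone does not force $U=0$. For $\mathfrak{sl}_2$ the top component $U_{-1}=c\,z^{-1}e$ produces $[\Lambda,U_{-1}]=-c\,h$, a constant Cartan element. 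So the argument must also use the $z$-adic content of the equation and the derivative term $\partial_z U$, not just the top principal component.

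I therefore plan a leading-term argument with respect to the bigrading (principal degree, $z$-degree). Suppose $U\neq 0$. Among the components of $U$ of lowest $z$-order $z^{-N}$ (with $N\ge1$), take the one $u\,z^{-N}$ of minimal principal degree. The derivative term contributes $-N\,u\,z^{-N-1}$. The aim is to show that this contribution cannot be cancelled by anything else in $A$:
\begin{itemize}
\item brackets with $\sum e_i$ keep the $z$-order;
\item brackets with $zF$ raise it by one;
\item iterated commutators $[U,[U,\dots]]$ lower the $z$-order but also lower the principal degree by at least $h$ per extra factor of $z^{-1}$.
\end{itemize}
Comparing the coefficient of $z^{-N-1}$ in the lowest principal degree where $u$ sits should then show that $A$ has a nonzero $z^{-N-1}$ term. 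This contradicts $A\in\mathfrak g[z]$, because $N+1\ge 2$.

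I expect this cancellation analysis to be the main obstacle. One must check carefully that no combination of $[\Lambda,\cdot]$ applied to other components of $U$, together with the higher-order commutators, can produce the same bidegree as $u\,z^{-N-1}$. If the direct bookkeeping gets too messy, my fallback is to use the regular semisimplicity of $\Lambda$: decompose $\widehat{\mathfrak g}=\ker\mathrm{ad}\,\Lambda\oplus\mathrm{im}\,\mathrm{ad}\,\Lambda$ and solve the equation $A-\Lambda\in\mathfrak b_-$ degree by degree. At each degree the newly determined component $U_j$ is then forced, by the constraint that it lie in $z^{-1}\mathfrak g[z^{-1}]$ together with the derivative term coming from $U_{j+h}$, to vanish by induction from the top degree downward.
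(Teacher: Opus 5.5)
Your reduction is correct and is the same as the paper's. The hypothesis $\nabla V\subseteq V$ is equivalent to the vanishing of all negative $z$-powers in the conjugated operator, and principal degree then forces $A-\Lambda$ to be a constant in $\mathfrak b_-$.

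\textbf{The non-cancellation step fails.} The gap is in the step meant to carry the proof: the claim that the derivative contribution $-N u\,z^{-N-1}$ cannot be cancelled. It can, and already by the nonlinear terms. Take $\mathfrak{sl}_2$, $\Lambda=E+zF$ and $U=a\,z^{-1}F$. Then $N=1$ under either reading of ``lowest $z$-order'', and $u=F$ is the only component. A direct computation gives
\[
\exp(-U)(\partial_z+\Lambda)\exp(U)=\partial_z+\Lambda+a\,z^{-1}H-(a^2+a)\,z^{-2}F .
\]
For $a=-1$ the derivative term $z^{-2}F$ is exactly cancelled by $\tfrac12[U,[U,\Lambda]]=-z^{-2}F$. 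This $U$ still violates the hypothesis, but through the $z^{-1}H$ term, not where your argument looks. Bidegree bookkeeping cannot exclude such cancellations. Your minimality choice constrains only the components sitting at $z^{-N}$, while brackets of several components land in the bidegree of $u\,z^{-N-1}$ as soon as their heights add up correctly.

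There is also a structural obstruction. The gauge $\exp(Y)$ built in the proof of Theorem~\ref{Katz-extension-theorem} (Cafasso--Wu, Lemma~3.9, in the Coxeter case) is a nonzero admissible $U$ with leading component $-\nu/z$. It satisfies $\exp(-Y)(\partial_z+\Lambda)\exp(Y)=\partial_z+\Lambda+\frac{\rho^\vee}{hz}$, which has no $z^{-m}$ term with $m\ge 2$. Your argument, applied to the leading component $-\nu/z$, would predict a nonzero $z^{-2}$ term, and that term is absent.

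\textbf{The fallback induction cannot start.} The principal-degree $-1$ component is necessarily $U_{-1}=c\,z^{-1}E_\theta$. Its bracket $[U_{-1},\Lambda]=c\,[E_\theta,F]$ is a constant Cartan element, so the degree-$0$ equation holds for every $c$. There is also no higher component whose derivative term could be used against it. In the paper's $\mathfrak{sl}_2$ computation, $a_1=0$ emerges only from four coupled nonlinear equations in $a_1,\dots,a_5$, reaching down to principal degree $-5$. A linear degree-by-degree splitting $\ker\mathrm{ad}\,\Lambda\oplus\mathrm{Im}\,\mathrm{ad}\,\Lambda$ does not see this.

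\textbf{The missing idea.} You need a way to exclude a nontrivial gauge $\gamma$, with $V=\gamma|0\rangle$ and $\gamma^{-1}(\partial_z+\Lambda)\gamma=\partial_z+\Lambda+A_0$, where $A_0$ is a nonzero constant. The paper supplies this from outside. Take the Cafasso--Wu gauge $\mu$ with $\mu^{-1}(\partial_z+\frac{\rho^\vee}{hz}+\Lambda)\mu=\partial_z+\Lambda$. Then $\mu\gamma|0\rangle$ satisfies the string equation, since the constant $A_0$ preserves $|0\rangle$. Uniqueness of the Witten--Kontsevich point gives $\mu\gamma|0\rangle=\mu|0\rangle$, i.e.\ $V=|0\rangle$.
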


Motivated by this result,  we will call $\nabla$ the vacuum operator.  Theorem \ref{main-theorem} is a purely Lie theoretic result, but the proof we give heavily relies on ideas coming from integrable systems, in particular certain uniqueness results related to the Witten-Kontsevich points. The role of integrable systems should not be surprising,  Cafasso and Wu show in \cite{CAW1} (Theorem 3.5) that one realization of the phase space of the Drinfeld-Sokolov hierarchy associated to $\mathfrak g$ and $\Lambda$ is precisely the space of $(\mathfrak g, \xi)$-deformations of the vacuum (this kind of Grassmannian description has a long history, certainly going back to the description of the KP hierarchy).

Special cases of Theorem \ref{main-theorem} can be completely elementary: Consider $A_{2}$ in its standard representation $\mathfrak g= \mathfrak s \mathfrak l_{3}$ (we will see later on that if the theorem is proved for one choice of faithful representation $\xi$ of $\mathfrak g$ then the theorem is proved for all such representations). One can choose $\Lambda$ as 
$$\Lambda=e_{2,1}+e_{3,2} + z\cdot e_{1,3}$$ 
where $e_{i,j}$ has $0$'s everywhere except a $1$ at the $(i,j)$ entry.
Note that for each $(\mathfrak g,\xi)$-deformation $V$ of $|0\rangle$,  the natural projection
$$\pi : V \rightarrow \mathbb{C}[z]^{3}=|0\rangle $$
is an isomorphism and $zV\subseteq V$. Let $v_{i}$ with $1\le i \le 3$ be the standard basis vectors of $\mathbb{C}^{3}$. Let 
$$w=[1+f, g, h]^{\textrm{T}} := \pi^{-1}v_1$$
So $f,g,h$ are in $\mathbb{C}(\!(1/z)\!)$ and are $\mathcal O(z^{-1})$. Suppose now $V$ is stabilized by the vacuum operator $\nabla=\partial_{z}+\Lambda$. Since $z$ also preserves $V$, the following vector is in $V$ as well:
\begin{eqnarray}
\label{third-power-equation}
(\nabla^3-z)w=\begin{bmatrix}
f'''+3h'+g+3zh''+3zg'\\[3pt]
g'''+3f''+2h+3zh'\\[3pt]
h'''+3g''+3f'
\end{bmatrix}
\end{eqnarray}
Since all entries are $\mathcal O(z^{-1})$ and since $\pi$ is an isomorphism, it follows that the above expression is in fact the $0$ vector. Assume for contradiction that $h$ is non-zero and write $h=h_r z^{r}+h_{r-1}z^{r-1}+\cdots$ with $r<0$ and $h_{r}\ne 0$. For a non-zero Laurent series in $\mathbb{C}(\!(1/z)\!)$ we call the largest $i$ with non-zero coefficient of $z^{i}$ the degree and we denote the degree of $0$ to be $-\infty$. From the second entry in Equation (\ref{third-power-equation}) it follows that if $\textrm{deg}(g)\le r+2$ and $\textrm{deg}(f)\le r+1$ then $2 h_r+3r h_r=0$ with $r$ an integer, a contradiction. It then follows from the third entry of Equation (\ref{third-power-equation}) that both $f$ and $g$ are non-zero of degree bigger than $r$ and $\textrm{deg}(g)= \textrm{deg}(f)+1$. It then follows from the first entry of Equation (\ref{third-power-equation}) that if $g=g_{s} z^{s} + \cdots$ with $g_{s}\ne 0$, then $g_{s}+3sg_{s}=0$. This is a contradiction and hence $h= 0$. Similar arguments imply $g= 0$ and $f =0$. Since $\nabla w$ and $\nabla^2 w$ are in $V$ it follows that $v_i$ is in $V$ for all $i$ and hence so is $z^j v_i$ for all $j \ge 0$. It follows that $V=|0\rangle$, as desired. 

This type of elementary argument ceases to be as effective for Lie algebras with more complicated cyclic elements. For example for $\mathfrak e_6$ in one of the $27$-dimensional faithful representations, cyclic elements are up to conjugation and scaling of the form
\begin{gather*}
e_{2,1}+e_{3,2}+e_{4,3}+e_{5,4}+e_{6,4}+e_{7,5}+e_{7,6}+e_{8,6}+e_{9,7}+e_{10,7}+e_{10,8}+e_{11,9}+e_{12,9}+e_{12,10} +e_{13,11}+e_{14,11} +e_{14,12} 
\\+ e_{15,12}+e_{16,13} + e_{16,14}+e_{17,14} +e_{17,15}+e_{18,16}+e_{18,17}+e_{19,17}+e_{20,18}+e_{20,19}+e_{21,19}+e_{22,20}+e_{22,21}+e_{23,20 }\\ 
+e_{24,22} + e_{24,23}  +e_{25,24}+e_{26,25}+e_{27,26} + z \cdot ( e_{1,21}+e_{2,22}+e_{3,24}+e_{4,25}+e_{6,26}+e_{8,27}) 
\end{gather*}
To deal with these more complicated cases, we prove Theorem \ref{main-theorem} using the work by Cafasso and Wu \cite{CAW1} on the Witten-Kontsevich point for the Drinfeld-Sokolov hierarchy associated to $\mathfrak g$.

The second aim of the present work is to show that a certain perturbation of the vacuum operator also has meaning in the context of the Drinfeld-Sokolov hierarchy. To define this perturbation we note that since points in the Grassmannian are special subspaces of $\mathbb{C}(\!(1/z)\!)^{d}$, one can say that the hierarchy lives on a formal punctured disc with  coordinate $1/z$. This disc can be thought of as the disc around the point $z=\infty$ on the sphere $\mathbb{P}^{1}$. The vacuum operator can be viewed as a connection on this disc and as such has a canonical Katz extension to the sphere.  We give the precise definition in Section \ref{Katz-extension-section}.  It is a particularly nice extension, unique up to isomorphism due to its construction via a quasi-inverse functor, see \cite{KAT} (Equation 2.4.11).  The Katz extension,  also called canonical extension,  is not too singular near $z=0$, and has normal form properties in the style of Levelt-Turrittin,  similar to those enjoyed by connections on a disc. This notion is far removed from the topic of integrable hierarchies: Katz used it in \cite{KAT} for the calculation of differential Galois groups. Surprisingly, we show in Theorem \ref{Katz-extension-theorem} that in the case of the vacuum operator this connection has integrable meaning: It generates an additional symmetry of the hierarchy.

Note that Theorem \ref{main-theorem} is stated in the context of classical cyclic elements. In contrast, we are able to prove Theorem \ref{Katz-extension-theorem} for generalized Drinfeld-Sokolov hierarchies associated to arbitrary Heisenberg subalgebras of the affine Lie algebra $\widehat{\mathfrak g}$.  In order to state the result precisely we need to recall the notion of Kac coordinates associated to conjugacy classes in the Weyl group of $\mathfrak g$.

\subsection{Kac coordinates}

Kac and Peterson have shown in \cite{KP} that Heisenberg subalgebras of $\widehat{\mathfrak g}$ can be parameterized by conjugacy classes $[w]$ in the Weyl group of $\mathfrak g$. This parametrization can be (non-canonically) reduced to the case of primitive conjugacy classes and we assume from now on $[w]$ is indeed of this form. Let $r$ denote the rank of $\mathfrak g$, let $N$ denote the order of $w$, and let $\zeta_{N}$ be a primitive $N$'th root of unity. The starting point in the construction of the Heisenberg algebra $\mathcal H_{[w]}$ associated to $[w]$ is a collection of $r+1$ non-negative integers $\textbf{s}=(s_0,\cdots,s_r)$. To define them, one shows that there exist two Cartan subalgebras $\mathfrak h, \widetilde{\mathfrak h}$ of $\mathfrak g$ with the following properties:
\begin{enumerate}[(i)]
\item
There exists a finite-order inner automorphism $\sigma=\textrm{exp}(\textrm{ad }h)$ of $\mathfrak g$ that restricts to $w$ on $\mathfrak h$ and to the identity on $\widetilde{\mathfrak h}$
\item
Let $e_1,\cdots,e_r$ be generators of the simple root spaces with respect to $\widetilde{\mathfrak h}$ and let $e_{0}$ be a generator of the lowest root space.  There are coprime integers $s_i$ in $[0,N-1]$ such that
$$\sigma(e_{i}) = \zeta_{N}^{s_{i}} \cdot e_{i}$$
for all $i$ (for notational convenience we say that $0$ is coprime to all integers)
\end{enumerate}
The $s_i$'s are called the Kac coordinates of $[w]$ and are well defined up to permutations coming from diagram automorphisms. In general, there is also an ambiguity coming from choosing different liftings $\sigma$ of the Weyl group element $w$, but since we assume that $[w]$ is a primitive conjugacy class, the Kac coordinates turn out to be well defined.  We list below all regular primitive cases, with the ordering of simple roots as in \cite{BOURBAKI}. Most of these Kac coordinates are calculated by Bouwknegt in \cite{BOU}, see also \cite{REE}. For completeness we calculate the remaining cases in Section \ref{Katz-extension-section}.

\setlength{\tabcolsep}{8pt}

\renewcommand{\arraystretch}{1.2}

\begin{center}

\begin{tabular}{c|c}
\textrm{regular primitive conjugacy class  }& \textrm{Kac coordinates}\\ \hline \hline 
&\\
$\textrm{F}_4(a_1)$& $(1,0,1,0,1)$\\
$\textrm{E}_6(a_1)$& $(1,1,1,1,0,1,1)$ \\
$\textrm{E}_6(a_2)$& $(1,1,0,0,1,0,1)$\\
$\textrm{E}_7(a_1)$& $(1,1,1,1,0,1,1,1)$\\
$\textrm{E}_7(a_4)$& $(1,0,0,0,1,0,0,1)$\\ 
$\textrm{E}_8(a_1)$ & $(1,1,1,1,0,1,1,1,1)$ \\
$\textrm{E}_8(a_2)$ & $(1,1,1,1,0,1,0,1,1)$ \\
$\textrm{E}_8(a_3)$ & $(1,1,0,0,1,0,0,1,0)$\\
$\textrm{E}_8(a_5)$ & $(1,1,0,0,1,0,1,0,1)$ \\
$\textrm{E}_8(a_6)$& $(1,0,0,0,1,0,0,1,0)$\\
$\textrm{E}_8(a_8)$& $(1,0,0,0,0,1,0,0,0)$\\
$\textrm{D}_{2n}(a_{n-1})$ & $(1,1,0,1,0,1,0,1,\cdots,0,1,1)$\\
$\textrm{Coxeter class}$ & $(1,\cdots,1)$\\
&
\end{tabular}

\end{center}
The Kac coordinates are used in the construction of $\mathcal H_{[w]}$ by singling out a suitable twisted realization of the loop algebra:
Consider the type $\textbf{s}$-realization $\widehat{\mathfrak g}(\textbf{s})$ of $\widehat{\mathfrak g}$ together with the isomorphism
$$\Phi : \widehat{\mathfrak g} \longrightarrow \widehat{\mathfrak g}(\textbf{s}) $$
as described in \cite{KAC} (Chapter 8). The Heisenberg algebra $\mathcal H_{[w]}$ is then obtained as the inverse image via $\Phi$ of a subalgebra of $\widehat{\mathfrak g}(\textbf{s})$ built out of loops with suitable coefficients that are in particular in $\mathfrak h$. 

\subsection{Katz extensions}

It follows from the construction of $\mathcal H_{[w]}$ that the Kac coordinates $\textbf{s}$ yield a gradation on $\widehat{\mathfrak g}$ for which $\mathcal H_{[w]}$ has a homogeneous basis.  Fix an element $\Lambda$ in $\mathcal H_{[w]}$ that is of $\textbf{s}$-degree $1$ and introduce the vacuum operator
$$\nabla_{\textrm{vac}} =\partial_{z} +\Lambda$$
If $[w]$ is the conjugacy class of the Coxeter elements one can see that this recovers our earlier definition. 

Let $(-,-)$ be a non-degenerate symmetric invariant bilinear form on $\mathfrak g$ and via its restriction to the Cartan algebra $\widetilde{\mathfrak h}$ we view the fundamental weights $\omega_{i}$ as elements of $\widetilde{\mathfrak h}$. Let $\alpha_i$ denote the simple roots. We show that the Kac coordinates describe the Katz extension of the vacuum operators:

\begin{thm}
\label{Katz-extension-theorem}
Let $[w]$ be a regular primitive conjugacy class in the Weyl group of $\mathfrak g$, let $N$ be the order of $w$ and let $\textrm{\emph{\textbf{s}}}=(s_0,\cdots,s_r)$ be the Kac coordinates. The Katz extension of the vacuum operator is 
$$(\partial_{z} +\Lambda)^{\textrm{\emph{Katz}}}=\partial_{z} + \Lambda + \frac{1}{Nz} \cdot \sum_{i=1}^{r} s_{i} \cdot  \frac{2}{(\alpha_i,\alpha_i)} \cdot \omega_i $$

\end{thm}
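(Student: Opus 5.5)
The plan is to produce an explicit \emph{algebraic} gauge transformation over a Kummer cover of $\mathbb{G}_m$. It should turn the proposed operator $\partial_z+\Lambda+\frac{1}{Nz}H$, with
$$H:=\sum_{i=1}^{r} s_i\,\frac{2}{(\alpha_i,\alpha_i)}\,\omega_i ,$$
into a connection that is visibly special in Katz's sense (\cite{KAT}, 2.4.10). Uniqueness of the Katz extension then identifies the two. Note that $\frac{2}{(\alpha_i,\alpha_i)}\omega_i$ is the fundamental coweight $\omega_i^\vee$, so $\alpha_j(H)=s_j$ for $1\le j\le r$. Hence $\mathrm{Ad}(z^{H/N})$ rescales $e_j$ by $z^{s_j/N}$. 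Writing the lowest root as $-\theta$ with $\theta=\sum a_j\alpha_j$, it rescales $e_0$ by $z^{-\sum a_j s_j/N}=z^{s_0/N-1}$, using $s_0+\sum a_j s_j=N$, which holds since $\sigma$ has order $N$.

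The first step is to write $\Lambda$ in terms of the $\mathbf{s}$-gradation. Since $\Lambda$ has $\mathbf{s}$-degree $1$, it is a combination of root vectors $x_\beta z^{k}$ with $\mathbf{s}$-degree $\beta(H)+kN=1$. Conjugating, $z^{H/N}\,x_\beta z^k\,z^{-H/N}=z^{(\beta(H)+kN)/N}x_\beta=z^{1/N}x_\beta$. So
$$z^{H/N}\Lambda z^{-H/N}=z^{1/N}\Lambda_0,\qquad \Lambda_0\in\mathfrak g \text{ constant.}$$
Moreover $\Lambda_0$ is $\Phi$ applied to $\Lambda$ and evaluated at $1$. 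By the construction of $\mathcal H_{[w]}$ it lies in the Cartan subalgebra $\mathfrak h$, and it is regular semisimple because $[w]$ is regular. A direct computation then gives
$$z^{H/N}\Big(\partial_z+\Lambda+\tfrac{1}{Nz}H\Big)z^{-H/N}=\partial_z+z^{1/N}\Lambda_0 .$$
Passing to $t=z^{1/N}$, this is $\partial_t+Nt^{N}\Lambda_0$ up to the factor $dz/dt$. This is a direct sum of rank-one connections $\partial_t+N\lambda t^N$ with $\lambda$ running over the weights of $\xi$ evaluated on $\Lambda_0$, so it is a sum of exponential connections $e^{-\lambda t^{N+1}N/(N+1)}$, which extend over $t=0$ with trivial monodromy. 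The second step is to descend along the Kummer cover $[N]:t\mapsto t^N=z$. The $\mu_N$-action is twisted by $\exp(2\pi i H/N)$, which on $\mathfrak g$ is exactly the inner automorphism $\sigma$ defining the Kac coordinates. Thus the original operator on $\mathbb{G}_m$ is the descent of a free exponential connection. It has a regular singularity at $0$, with residue $H/N$ and monodromy $\sigma$, and its formal restriction at $\infty$ is the given $\partial_z+\Lambda$.

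The third step, which I expect to be the main obstacle, is to verify Katz's specialness condition and not merely regularity at $0$. Katz's canonical extension is characterized as the unique extension that is regular singular at $0$ and whose global structure is determined by the formal structure at $\infty$; equivalently, the global differential Galois group equals the formal one at $\infty$. My approach is to check this via the Kummer picture. Pulled back along $[N]$, the connection is a sum of $\mathcal{E}(\lambda t^{N+1})$'s on $\mathbb{A}^1$, and these are the Katz extensions of their formal germs at $\infty$. Katz's construction commutes with Kummer pullback and with descent of the $\mu_N$-equivariant structure, and the descent datum is determined at $\infty$ because it is the formal monodromy there. So the descended connection is special. The delicate points are the sign and normalization of $H$ relative to the ambiguity of $z^{H/N}$ by integer shifts, i.e.\ the choice of exponents at $0$. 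I would fix this by checking that the residue eigenvalues $\frac1N\mu(H)$ match the formal monodromy exponents at $\infty$ forced by Katz's normalization, for instance in the Coxeter case, where $H=\rho^\vee$ and $N=h$. Finally, I would tabulate the remaining Kac coordinates (the cases not in \cite{BOU}) by computing $\sigma$ on the simple root vectors, as the paper promises in Section \ref{Katz-extension-section}.
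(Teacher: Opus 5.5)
Your Kummer computation is the same as the last step of the paper's proof. Conjugating by $z^{H/N}$ turns $\partial_z+\Lambda+\frac{H}{Nz}$ into $\partial_z+z^{1/N}\Lambda_0$, with $\Lambda_0=\lambda$ regular semisimple. So after a finite pull-back the operator is regular at the origin and diagonalizable by a constant gauge transformation. That gives regular singularity at $z=0$ and the rank-one splitting.

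What is missing is the other half of the definition. You never prove that $\partial_z+\Lambda+\frac{H}{Nz}$ is an \emph{extension} of $\partial_z+\Lambda$, i.e.\ that the two are isomorphic over $\mathbb{C}(\!(1/z)\!)$. You only assert that its ``formal restriction at $\infty$ is the given $\partial_z+\Lambda$''. Literally that restriction is $\partial_z+\Lambda+\frac{H}{Nz}$, and a $1/z$ term can change the formal type. For example, $\partial_z+1+\frac{c}{z}$ is special for every $c$, but it is isomorphic to $\partial_z+1$ at $\infty$ only for $c\in\mathbb{Z}$. So specialness alone cannot single out the residue $H/N$; matching at $\infty$ is exactly what forces it. Checking the normalization ``for instance in the Coxeter case'' does not prove it in general.

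The Kummer picture does not settle this either. Conjugating by $z^{H/N}$ on a suitable cover, $\partial_z+\Lambda$ becomes $\partial_z+z^{1/N}\Lambda_0-\frac{H}{Nz}$, while your operator becomes $\partial_z+z^{1/N}\Lambda_0$, with the same descent twist. You would still need a formal gauge transformation at $\infty$, compatible with the descent, that removes $-\frac{H}{Nz}$.

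This is where the paper does its main work. It constructs $Y=\sum_{i\ge 1}Y_{-i(N+1)}$ of negative $\textbf{s}$-degree, so that $\exp(Y)$ is a formal gauge transformation at $z=\infty$, with $\exp(\mathrm{ad}\,Y)(\partial_z+\frac{r_{\textbf{s}}}{z}+\Lambda)=\partial_z+\Lambda$.

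The first step needs $\nu$ with $[\nu,\Lambda]=r_{\textbf{s}}$. The paper gets $\nu\in\mathfrak g_{-1}$ from Conditions 1 and 2. One can also see $r_{\textbf{s}}\in\mathrm{Im}\,\mathrm{ad}\,\Lambda$ from the degree-wise splitting $\widehat{\mathfrak g}=\ker\mathrm{ad}\,\Lambda\oplus\mathrm{Im}\,\mathrm{ad}\,\Lambda$, because $\mathcal H_{[w]}$ has no $\textbf{s}$-degree $0$ part when $w$ is elliptic.

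Each later step is obstructed only by the Heisenberg part of the right-hand side. That part is removed by correcting the previous $Y$ by a Heisenberg element, using $[\partial_z+\frac{r_{\textbf{s}}}{z},\Lambda_j]=\frac{j}{N}\Lambda_{j-N}$ with $j=-(i-1)(N+1)\neq 0$. You need this recursion, or an equivalent determination of the formal type of $\partial_z+\Lambda$ at $\infty$, to close the proof.

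Two smaller points:
\begin{enumerate}[(i)]
\item Taking $t=z^{1/N}$ does not in general make $t^{H}$ a gauge transformation on $V$, because weights of $\xi$ pair with the coweight $H$ only rationally. For $\mathfrak{sl}_2$ in the standard representation, $\mu(\rho^\vee)=\pm\frac12$. As in the paper, you must pass to $z^{1/i}$ with $N\mid i$ clearing these denominators.
\item The quantity $\sum_i a_is_i$ is the order of the lift $\sigma$, not a priori of $w$. Equating it with the order $N$ of $w$ needs Reeder's Proposition 2.2 together with the uniqueness of lifts for elliptic $w$. It does not follow from ``since $\sigma$ has order $N$''.
\end{enumerate}
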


As for Theorem \ref{main-theorem}, this result makes sense without any mention of integrable hierarchies.  However, as indicated earlier, our motivation comes from showing that these Katz extensions of vacuum operators do in fact play a crucial role in integrable systems. Delduc and Feher describe in \cite{DF} how to construct for every suitable element $\Lambda$ in $\mathcal H_{[w]}$ of $\textbf{s}$-degree $1$ an integrable hierarchy of Drinfeld-Sokolov type (the original construction of Drinfeld and Sokolov in \cite{DS} is recovered by letting $[w]$ be the conjugacy class of a Coxeter element).  See also the work of de Groot-Hollowood-Miramontes \cite{GHM} for another construction of the generalized Drinfeld-Sokolov hierarchies.  As shown by Hollowood-Miramontes-Guillen in \cite{HMG}, these hierarchies possess additional symmetries which commute with the flows but amongst themselves obey commutation relations of Virasoro type.  As a consequence of Theorem \ref{Katz-extension-theorem} we will show:

\begin{cor}
The Katz extension of the vacuum operator $\partial_{z} +\Lambda$ describes an additional symmetry of the generalized Drinfeld-Sokolov hierarchy associated to $\mathfrak g$ and $\Lambda$.
\end{cor}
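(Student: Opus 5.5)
By Theorem \ref{Katz-extension-theorem}, the Katz extension of $\partial_z+\Lambda$ splits as $\Lambda$ plus $z^{-1}$ times the derivation defining the $\textbf{s}$-gradation on $\widehat{\mathfrak g}$. I then plan to identify this combination with generators already known to act as symmetries of the generalized Drinfeld--Sokolov hierarchy in the Grassmannian picture of Cafasso--Wu and Hollowood--Miramontes--Guillen.

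\emph{Step 1: rewrite the extension.} Put $h_{\textbf{s}}=\sum_{i=1}^r s_i\,\frac{2}{(\alpha_i,\alpha_i)}\,\omega_i\in\widetilde{\mathfrak h}$. Then
$$(\partial_z+\Lambda)^{\textrm{Katz}}=\Lambda+\frac{1}{Nz}\bigl(Nz\,\partial_z+\xi(h_{\textbf{s}})\bigr)=\Lambda+\frac{1}{Nz}D_{\textbf{s}},$$
where $D_{\textbf{s}}:=Nz\partial_z+\xi(h_{\textbf{s}})$. The vector $2\omega_i/(\alpha_i,\alpha_i)$ is the fundamental coweight dual to $\alpha_i$, so $[h_{\textbf{s}},e_i]=s_ie_i$ for $1\le i\le r$. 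Moreover $[h_{\textbf{s}},e_0]=-(\sum_i a_is_i)e_0$, where $e_0$ spans the lowest root space and $a_i$ are the marks. Combined with the relation $s_0+\sum a_is_i=N$, this shows that $\mathrm{ad}\,D_{\textbf{s}}$ acts on $z e_0$ by $N-\sum a_is_i=s_0$. Hence $\mathrm{ad}\,D_{\textbf{s}}$ is exactly the $\textbf{s}$-grading derivation $d_{\textbf{s}}$ of \cite{KAC}, transported to the untwisted loop realization.

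\emph{Step 2: commutation with the Heisenberg algebra.} Since $\mathcal H_{[w]}$ has an $\textbf{s}$-homogeneous basis $\{\Lambda_j\}$, we get $[D_{\textbf{s}},\Lambda_j]=j\Lambda_j$. In the Grassmannian description, a point is $V=\exp(U)|0\rangle$ and the flows are $V\mapsto\exp(\sum_j t_j\Lambda_j)V$. An operator $A$ that normalizes the span of the $\Lambda_j$ therefore induces a vector field on the phase space, $V\mapsto AV$ to first order. This vector field commutes with the flows up to the reparametrization of times dictated by the commutators. This is precisely the mechanism behind the Hollowood--Miramontes--Guillen additional symmetries \cite{HMG}. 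There the Virasoro-type generators are, in the Grassmannian language, $L_k\sim z^{k}D_{\textbf{s}}$ (suitably normalized by $N$), satisfying $[z^kD_{\textbf{s}},z^lD_{\textbf{s}}]=N(l-k)z^{k+l}D_{\textbf{s}}$. The operator $z^{-1}D_{\textbf{s}}/N$ is the $k=-1$ member of this family. The remaining term $\Lambda=\Lambda_1$ is itself the generator of the $t_1=x$ flow, hence trivially a symmetry. A linear combination of symmetries is a symmetry, which gives the Corollary.

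\emph{Main obstacle.} The delicate point is matching conventions between the Grassmannian picture and \cite{HMG}, who work with dressing operators and the twisted realization $\widehat{\mathfrak g}(\textbf{s})$. The issues are to check that $L_{-1}$ does occur among their additional symmetries (it should preserve the admissible class of points; since $z^{-1}$ lowers degree this is plausible, but it must be verified), and that the central and shift corrections do not alter the $\xi(h_{\textbf{s}})$ term. I would first push $D_{\textbf{s}}$ through $\Phi$ and verify that it becomes $d_{\textbf{s}}$ up to a scalar; scalars act trivially on subspaces. I would then compare with the explicit formula for their vector fields acting on the dressing operator $\Theta$, namely $\delta_k\Theta\,\Theta^{-1}=-(\Theta z^kD_{\textbf{s}}\Theta^{-1})_-$.
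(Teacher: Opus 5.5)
You have the key computation, but the step you list as the ``main obstacle'' is exactly the step that proves the Corollary, and you leave it open. Your Step 1 is the paper's computation done in the untwisted loop algebra. The paper conjugates $D=z^{1-N}\partial_z+N\lambda$ on $\widehat{\mathfrak g}(\sigma)$ back through $\Phi$, and checks on $e_0,\dots,e_r$ (using $N-\sum a_is_i=s_0$) that it equals $N(\partial_z+R/z+\Lambda)$. Since $\Phi^{-1}\circ z^{1-N}\partial_z\circ\Phi=z^{-1}D_{\textbf{s}}$, this is your identity $N\cdot(\partial_z+\Lambda)^{\textrm{Katz}}=z^{-1}D_{\textbf{s}}+N\Lambda$, and your reading of $z^{-1}D_{\textbf{s}}$ as an $L_{-1}$-type generator is consistent with the paper.

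The gap is that you never show that this operator generates an additional symmetry. You defer it to an unchecked match with the Hollowood--Miramontes--Guillen generators. The mechanism you offer instead is not right for $k=-1$: you say an operator normalizing the span of the $\Lambda_j$ commutes with the flows ``up to reparametrization of times''. But $z^{-1}D_{\textbf{s}}$ sends $\Lambda_k$ to $k\Lambda_{k-N}$, and for $0<k<N$ this is a negative-degree Heisenberg element, not a flow generator. So no change of the $t_j$ absorbs it; the explicit $t$-dependent term $\frac1N\sum_j jt_j\Lambda_{j-N}$ accounts for it. What is needed is the criterion of Equation (\ref{derivation-equation}): a derivation $d$ with $[d,\Lambda_k]=\frac kN\Lambda_{k-N}$ for all $k\in\textrm{E}^{\ge0}$. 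For such $d$, the operator $d+\frac1N\sum_j jt_j\Lambda_{j-N}$ commutes with the flow operators as in Equation (\ref{commutation-equation}), and one then dresses and projects it as in Hollowood--Miramontes--Guillen and Wu.

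Your Step 1 already gives this criterion in one line. Since $\Phi(zx)=z^N\Phi(x)$ and $\lambda_{j-N}=z^{-N}\lambda_j$, you have $\Lambda_{j-N}=z^{-1}\Lambda_j$. Using $\partial_z+\frac{h_{\textbf{s}}}{Nz}=\frac{1}{Nz}D_{\textbf{s}}$, and that $z^{-1}$ commutes with $\Lambda_j$,
$$\Bigl[\partial_z+\frac{h_{\textbf{s}}}{Nz}+\Lambda,\ \Lambda_j\Bigr]=\frac{1}{Nz}[D_{\textbf{s}},\Lambda_j]=\frac{j}{N}\,z^{-1}\Lambda_j=\frac{j}{N}\Lambda_{j-N}.$$
The $\Lambda$-term drops out because $\mathcal H_{[w]}$ is abelian. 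So the Katz extension as a whole is a generator $d$ as in Equation (\ref{derivation-equation}), which is how the paper concludes. Splitting off $\Lambda$ as the $x$-flow is harmless but unnecessary. You also do not need to match the HMG normalization of $L_{-1}$ or track central terms.
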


This shows that the a priori unrelated notion of the Katz extension does indeed play a crucial role in this integrable hierarchy.  The proof is by direct calculation: We calculate the Katz extension and compare it to the known shape of additional symmetries. This leaves open a more conceptual understanding between the two notions and a more direct understanding of this relation would be desirable!

\section{Characterizing the vacuum} 
\label{vacuum-section}

In this section we prove Theorem \ref{main-theorem}. Write $\gamma=\textrm{exp}(-U)$  where $U=\sum_{i<0} U_{i}$ with $U_{i}$ in $\widehat{\mathfrak g}$ of principal degree $i$, and as mentioned before assume that $U$ is in $z^{-1} \mathfrak g[z^{-1}]$.  Suppose the point $V =\gamma |0\rangle$ of the Grassmannian is stabilized by the vacuum operator:
\begin{eqnarray}
\label{V-stabilization-equation}
(\partial_{z} + \Lambda) V\subseteq V
\end{eqnarray}
Equivalently, $|0\rangle$ is stabilized by $\gamma^{-1}(\partial_{z} +\Lambda)\gamma$. Let us rewrite this stabilization condition as a vanishing result. Expand
\begin{eqnarray}
\label{Ai-equation}
\gamma^{-1}(\partial_{z} +\Lambda)\gamma = \textrm{exp}(\textrm{ad }U)(\partial_{z} + \Lambda)=\partial_{z} + \Lambda + \sum_{i\le 0} A_{i} z^{i}
\end{eqnarray}
with $A_{i}$ in $\mathfrak g$ independent of $z$. Then the stabilization is equivalent to the vanishing result
\begin{eqnarray}
\label{vanishing-equation}
A_{i}=0 \;\;\; \textrm{ for all $i<0$}
\end{eqnarray}
In particular, the validity of Theorem \ref{main-theorem} is independent of the choice of faithful representation $\xi$ of $\mathfrak g$. We now show how to deduce the theorem from various results by Cafasso and Wu in \cite{CAW1}, concerning the existence and uniqueness of the Witten-Kontsevich point in Drinfeld-Sokolov phase space. This point can be characterized by various versions of what is called the string equation, and Equation (\ref{V-stabilization-equation}) turns out to be closely related.

For $U$ as before,  following standard notation in the setting of integrable systems,  we call elements of the form $\textrm{exp}(-U)$ gauge transformations.  Consider the Drinfeld-Sokolov hierarchy associated to $\mathfrak g$. This considers the choice of a cyclic element $L$, and we choose $L=-\Lambda$. Cafasso and Wu show in \cite{CAW1} (Theorem 3.11) the existence and uniqueness of a Witten-Kontsevich point in the phase space. Write it as 
$$W = \mu |0\rangle $$
for a gauge transformation $\mu$.  Let $h$ denote the Coxeter number of $\mathfrak g$ and let $\rho^{\vee}$ be half the sum of positive co-roots, viewed as an element of $\widetilde{\mathfrak h}$.  By \cite{CAW1} (Lemma 3.9)
\begin{eqnarray}
\label{special-gauge-equation}
\mu^{-1} \left (\partial_{z}  +\frac{\rho^{\vee}}{h} \cdot \frac{1}{z} +\Lambda \right ) \mu =\partial_{z} +\Lambda
\end{eqnarray}
By Equation (\ref{vanishing-equation}) it follows that
\begin{eqnarray}
\label{W-string-equation}
 \left (\partial_{z}  +\frac{\rho^{\vee}}{h} \cdot \frac{1}{z} +\Lambda \right ) (\mu\gamma |0\rangle ) \subseteq \mu \gamma |0\rangle 
 \end{eqnarray}
 
Equation (\ref{W-string-equation}) is nothing but the string equation for the dressing operator, \cite{CAW1} (Equation 3.29), when all flow variables are set to $0$.  Let $t_{i}$ be the flow variables, indexed by the positive exponents of $\widehat{\mathfrak g}$, and let the $\Lambda_{i}$'s be principal degree $i$ basis elements of the principal Heisenberg algebra as in. Since
$$\exp\left (-\sum_{i} t_i \Lambda_i \right )( \partial_{z}  +\frac{\rho^{\vee}}{h} \cdot \frac{1}{z}) \exp \left (\sum_{i} t_i \Lambda_i \right )=  \partial_{z}  +\frac{\rho^{\vee}}{h} \cdot \frac{1}{z} - \sum_{i} \frac{i t_i}{h}  \Lambda_{i-h}$$
it follows that \cite{CAW1} (Equation 3.29) also holds for non-zero time variables $t_i$.  Put differently, $\mu \gamma |0\rangle$ is the Witten-Kontsevich point. By the uniqueness of the Witten-Kontsevich point it follows that
$$\mu \gamma |0\rangle = \mu |0\rangle $$
Therefore $\gamma=1$ and $ V= |0\rangle$, as desired. This completes the proof of Theorem \ref{main-theorem}.

\begin{remark}
Some of the arguments by Cafasso and Wu in \cite{CAW1} depend on the construction and properties of tau functions associated to each point in Drinfeld-Sokolov phase space. Such arguments can be subtle since the very definition of the tau function has a certain non-uniqueness and furthermore, arguments that involve a shifting of time variables have to be treated very carefully since in general the tau functions are only formal power series in the time variables.

We therefore outline an algorithm that sidesteps the tau function arguments. The starting point is to calculate
$\textrm{exp}(\textrm{ad } U)(\partial_{z} + \Lambda)$
recursively with respect to principal degree. Equation (\ref{vanishing-equation}) then puts constraints on the $U_{i}$'s. The crucial point is that one can show that these equations force $A_{0}=0$ and hence 
\begin{eqnarray}
\label{vacuum-characterization}
\gamma^{-1}(\partial_{z} + \Lambda)\gamma=\partial_{z}+\Lambda
\end{eqnarray}
This in fact implies the theorem:  By \cite{CAW1} (Lemma 3.9, Theorem 3.11) there is a unique gauge transformation $\mu$ such that
\begin{eqnarray}
\label{KS-to-vacuum-equation}
\mu^{-1}(\partial_{z} +\frac{\rho^{\vee}}{h} \cdot \frac{1}{z} +\Lambda) \mu = \partial_{z} + \Lambda
\end{eqnarray}
This result does not depend on tau function arguments. It now follows from Equation \ref{vacuum-characterization} that $\gamma=1$, as desired.

We illustrate the algorithm for $\mathfrak s\mathfrak l_{2}$. Consider a standard basis $E,F,H$. The elements $U_{i}$ are given by
 $$U_{-1}=   \frac{a_1}{z} \cdot E \; ,\;   U_{-2}=                                                                                          
\frac{a_2}{z} \cdot  H \;, \;  U_{-3}=                                                                     \frac{a_3}{z^2} \cdot E + \frac{a_4}{z} \cdot F \; ,\;
U_{-4}=                                                                                          
\frac{a_5}{z^2} \cdot H, \cdots$$
where the $a_{j}$'s are arbitrary constants. Note that $U_{-1}$ has no contribution from $F$ since by assumption all $U_{i}$'s are not only of principal degree $i$ but also of negative $z$-degree. One calculates
$$A_0=a_{1} \cdot H -2a_{2} \cdot F$$
The equations up to principal degree $-5$ yield
\begin{eqnarray*}
0&=& a_1^2 - 2 a_2  \\
0 &=& a_1  a_2 -a_3 + a_4 \\
0 &=&  a_1 a_4 +2 a_2^2- 2a_5 \\
 0 &=& a_1a_4+2 a_2^2  -2a_1a_3   - a_1 + 2a_5 
\end{eqnarray*}
These equations force $a_1=a_2=0$ and hence $A_0=0$, as desired.

The algorithm can be seen to work for all simple Lie algebras. Even for, say, $\mathfrak e_6$ the resulting system of equations becomes quite involved, so we do not list them here.

\end{remark}

\section{Katz extension and additional symmetries}
\label{Katz-additional-section}

In his work on the calculation of differential Galois groups, Katz introduced in \cite{KAT} the notion of the canonical extension of a connection on a formal disc to a connection on the sphere. One useful property in this context is that the local Galois group of a connection agrees with the global Galois group of the canonical Katz extension. These notions are far removed from the Drinfeld-Sokolov hierarchies but we show, somewhat surprisingly, that the Katz extension of the vacuum operator is an important object for the associated integrable hierarchy.

\subsection{Additional symmetries}

It is known, see for example \cite{DF} and \cite{GHM}, that for suitable graded elements $\Lambda$ in a Heisenberg subalgebra of an affine Lie algebra there exists an associated generalized Drinfeld-Sokolov hierarchy. The Heisenberg algebras are classified by Kac and Peterson in \cite{KP}. Up to conjugacy they are parameterized by conjugacy classes in the Weyl group of $\mathfrak g$. Equivalently, this yields a classification of maximal abelian subalgebras of loop algebras $\widehat{\mathfrak g}=\mathfrak g[z,z^{-1}]$ and it is for this variant that we now recall relevant aspects of the constructions.
 
Let $w$ denote a Weyl group element and let $\sigma = \textrm{exp}(\textrm{ad} \; h)$ be a finite-order inner automorphism of $\mathfrak g$ that restricts to $w$ on a Cartan algebra $\mathfrak h$.  Let $a_{0},\cdots,a_{r}$ be the Kac labels of $\mathfrak g$ (see \cite{KAC} Table Aff $1$ in Chapter 4). Note that $a_0=1$ and the highest root of $\mathfrak g$ is written in terms of the simple roots $\alpha_i$ as $\sum_{i=1}^{r} a_i \alpha_i$. Kac has shown that, possibly after conjugation, one can assume that $h$ is in a Cartan algebra $\widetilde{\mathfrak h}$ and there is a choice of generators $\{e_i,f_i,h_i \}_{1\le i \le r}$ of $\mathfrak g$ satisfying the Serre relations, the $h_i$'s  span $\widetilde{\mathfrak h}$, and
$$\sigma(e_i)=\exp(\frac{2 \pi i s_i}{\sum_{j=0}^{r} a_j s_j}) \cdot e_i$$
where the $s_i$'s are non-negative integers and the non-zero $s_i$'s are co-prime. In the following, we let
\begin{eqnarray}
\label{N-equation}
N:=\sum_{i=0}^{r} a_i s_i
\end{eqnarray}
The $s_i$'s are called the Kac coordinates of $\sigma$ and we define $\textbf{s}=(s_0,\cdots,s_r)$. Fix a primitive $N$'th root of unity $\zeta_{N}$ and for any integer $i$ denote by $\mathfrak g_{[i]}$ the subspace of $\mathfrak g$ on which $\sigma$ acts by multiplication by $\zeta_{N}^{i}$. Consider the Lie algebra
$$\widehat{\mathfrak g}(\sigma)= \bigoplus_{i}  \mathfrak g_{[i]} \cdot z^{i}$$
For each root $\alpha =\sum_{i} k_i \alpha_i$ define its degree as $\textrm{deg } \alpha= \sum_{i} k_{i} s_{i}$. As discussed in \cite{KAC} (Chapter 8), there is an isomorphism of Lie algebras
\begin{eqnarray}
\label{Phi-equation}
\Phi : \widehat{\mathfrak g} \rightarrow \widehat{\mathfrak g}(\sigma)
\end{eqnarray}
that on a root space $\mathfrak g_{\alpha}$ is given by multiplication by $z^{\textrm{deg }\alpha}$. More generally, on $\mathfrak g_{\alpha} z^{k}$ the map $\Phi$ is given by
\begin{eqnarray}
\label{Phi-definition}
x z^{k} \mapsto x z^{\textrm{deg }\alpha + kN}
\end{eqnarray}
Consider the subalgebra of $\widehat{\mathfrak g}(\sigma)$ given by
$$\widehat{\mathfrak h}(\sigma) = \bigoplus_{i} \mathfrak h_{[i]} \cdot z^{i}$$
where $\mathfrak h_{[i]} = \mathfrak h \cap \mathfrak g_{[i]}$. The Heisenberg algebra $\mathcal H_{[w]}$ associated by Kac and Peterson to the conjugacy class of $w$ is then given as
$$\mathcal H_{[w]}=\Phi^{-1} \left ( \widehat{\mathfrak h}(\sigma) \right )$$ 
Since for all integers $k$ clearly $\mathfrak h_{[i + kN]}=\mathfrak h_{[i]}$ it follows that one can choose indices $i_t$ with $0 \le i_{1} \le \cdots \le  i_{r} < N$ and elements $\mu_{i_{t}}$ in $\mathfrak h_{[i_{t}]} \cdot z^{i_{t}}$ such that 
\begin{eqnarray}
\label{basis-equation}
\mu_{i_{t}} \cdot z^{i_{t}+kN} \;\;\; \textrm{with } 1\le t \le r  \;\; , \;\;  k \in \mathbb{Z}
\end{eqnarray}
is a basis of $\widehat{\mathfrak h}(\sigma)$. Let $\textrm{E}$ denote the multi-set $\{i_t + kN \; | \;  1\le t \le r  \;, \; k \in \mathbb{Z} \}$ and let $\textrm{E}^{\ge 0}$ denote the multi-set corresponding to $k \ge 0$. The above basis of $\widehat{\mathfrak h}(\sigma)$ can then be labeled as $\{\lambda_{j} \}_{j \in \textrm{E}}$ such that $\lambda_{j}$ is in $\mathfrak h_{[j]} \cdot z^{j}$ and
\begin{eqnarray}
\label{derivative-equation}
z^{1-N} \partial_{z} \lambda_{j}= j \lambda_{j-N}
\end{eqnarray}
Define the corresponding basis of $\mathcal H_{[w]}$ as
\begin{eqnarray}
\label{Lambda-definition}
\Lambda_j := \Phi^{-1} \left(\lambda_{j}\right )
\end{eqnarray}
Let $\theta$ denote the highest root, let $E_{\theta}$ and $E_{-\theta}$ be certain non-zero elements in the highest and lowest root space of $\mathfrak g$, see \cite{KAC} (\S 7.4) for the precise definition. Let $e_0=zE_{-\theta}$ and let $f_0=E_{\theta}/z$. 
For a collection $d_0,\cdots,d_r$ of non-negative integers (not all zero), define a $\mathbb{Z}$-gradation on $\widehat{\mathfrak g}$ by
$$\textrm{deg } e_i =d_i \; \;  ,\; \; \textrm{deg } f_i = -d_i$$
for $0\le i \le r$. If one chooses $d_i=s_i$ for all $i$ then $\Lambda_j$ is of degree $j$, see \cite{KAC}. We call this the $\textbf{s}$-gradation. 
For example $\textbf{s}_{0}:=(1,0,\cdots,0)$ yields the standard homogeneous gradation. 

As in \cite{DF} we now assume $\Lambda$ in $\mathcal H_{[w]}$ is of $\textbf{s}$-degree $1$,  where the gradation $\textbf{s}$ satisfies $\textbf{s} \ge \textbf{s}_{0}$ (hence the first entry of $\textbf{s}$ is non-zero). Let $x$ be an indeterminate. The Lax operator of the corresponding generalized Drinfeld-Sokolov hierarchy is of the form
$$L=\partial_{x}+\Lambda + q$$ 
where $q$ is in $\widehat{\mathfrak g}_{\ge 0}(\textbf{s}_{0})$ and in $\widehat{\mathfrak g}_{<1}(\textbf{s})$, meaning $q$ in $\widehat{\mathfrak g}$ is of non-negative $\textbf{s}_{0}$-degree and of $\textbf{s}$-degree less than $1$.

For a Lie algebra element $M$ let $M^{+}$ denote the non-negative part of $M$ in the $\textbf{s}$-gradation and $M^{-}=M-M^{+}$ the negatively graded part. The simple observation for defining flows on operators of the form $L$ is that if $M$ satisfies $[M,L]=0$ then
$$[M^{+},L]=-[M^{-},L]$$
is both in $\widehat{\mathfrak g}_{\ge 0}(\textbf{s}_{0})$ and $\widehat{\mathfrak g}_{<1}(\textbf{s})$. Therefore one can define a flow, with flow variable $t_{M}$ say, by
$$\partial_{t_{M}} L = [M^{+},L]$$
Since $L$ can be gauge transformed to an operator of the form $\partial_{x} + H$ with $H$ in the Heisenberg algebra (cf. \cite{GHM} Proposition 3.2), it follows that gauge transforms of elements in $\mathcal H_{[w]}$ can play the role of $M$. This yields the usual flows of the hierarchy. The fact that the elements in $\mathcal H_{[w]}$ commute amongst themselves implies that the flows commute.

The notion of additional symmetry of the hierarchy is obtained by relaxing in the equation $[M,L]=0$ the requirement that $M$ is in the Lie algebra. Rather, it can now involve also the derivations acting on $\widehat{\mathfrak g}$. Suppose there is a derivation $d$ such that for all $k$ in $\textrm{E}^{\ge 0}$
\begin{eqnarray}
\label{derivation-equation}
\left [d,\Lambda_{k} \right ]=\frac{k}{N} \cdot  \Lambda_{k-N}
\end{eqnarray}
Let $t_{k}$ denote the flow variable associated to the Heisenberg algebra element $\Lambda_{k}$. Then for all $k$ as above
\begin{eqnarray}
\label{commutation-equation}
\left [d+ \frac{1}{N} \sum_{j \in \textrm{E}^{\ge 0}} j t_j \Lambda_{j-N}  \; , \;  \partial_{t_{k}} - \Lambda_{k} \right ]=0
\end{eqnarray}
In analogy with the construction of the flows, this allows to define an evolution equation for the Lax operator $L$: One defines
$$\partial_{\beta} L = [L,  \delta_{<0}]$$
where $\delta_{<0}$ is obtained from $d+ \frac{1}{N} \sum_{j \in \textrm{E}^{\ge 0}} j t_j \Lambda_{j-N}$ via a dressing procedure.  We refer to the work of Hollowood-Miramontes-Guillen \cite{HMG} and Wu \cite{WU} (Equation 4.26, note that we use slightly different sign conventions) for details. This evolution commutes with the flows, here Equation (\ref{commutation-equation}) is the crucial ingredient.  See \cite{HMG} (Equation 3.11) and \cite{WU} (Proposition 4.5).  One obtains an additional symmetry of the hierarchy.  It is not called a flow itself since it is part of an infinite family of analogous symmetries that all commute with the flows but amongst themselves obey commutation relations of Virasoro type. 

The operator $d$ can be viewed as the generator of the symmetry and we now describe how to find such a derivation.  Let $\lambda= \Phi(\Lambda)$. Consider on $\widehat{\mathfrak g}(\sigma)$ the derivation 
$$ D:= z^{1-N}\cdot \partial_{z} + N \lambda$$ 
where $\lambda$ stands for the derivation $[\lambda,-]$. We now calculate what this derivation looks like on $\widehat{\mathfrak g}$.

Let as before $\{e_i,f_i,h_i\}_{i=1,\cdots,r}$ be a set of generators of $\mathfrak g$ satisfying the Serre relations. For all $1 \le i \le r$ one has
$$\Phi(e_{i}) = z^{s_{i}} e_{i}$$
Since $\theta = \sum_{i=1}^{r} a_{i} \alpha_{i}$ it follows from Equation (\ref{Phi-definition}) that
$$\Phi(e_0)=z^{s_{0}} E_{-\theta}$$
It now follows for $1\le i \le r$ that
$$(\Phi^{-1} \circ D \circ  \Phi) (e_{i})=s_{i} \Phi^{-1} (z^{s_{i}-N} e_{i}) + [N\Lambda,e_i]=\frac{s_{i} e_{i}}{z} + [N\Lambda,e_i]$$
Since $\Phi(E_{-\theta})=z^{s_0-N}E_{-\theta}$ one also has
$$(\Phi^{-1} \circ D \circ \Phi) (e_{0})=s_{0}\Phi^{-1} (z^{s_{0}-N} E_{-\theta})+[N\Lambda, e_0]=\frac{s_{0}e_{0}}{z} +[N\Lambda, e_0] $$
We now describe $\Phi^{-1} \circ D \circ \Phi$ in a slightly different manner.
For each $1\le i \le r$ let $\omega_i$ be the $i$'th fundamental weight, viewed via a choice $(-,-)$ of non-degenerate invariant bilinear form as an element of $\widetilde{\mathfrak h}$. So 
$$[\omega_i,e_j]=\delta_{i,j}\cdot  \frac{(\alpha_j,\alpha_j)}{2} \cdot e_{j}$$
Recall that for the Weyl group element $w$ that we started with we have chosen a lift $\sigma=\textrm{exp}(\textrm{ad }h)$ for some $h$ in $\widetilde{\mathfrak h}$. Rescale the element $h$ as $h = 2 \pi i R$, so that 
$$[R,e_{i}]=\frac{s_{i}}{N}e_{i}$$
Concretely, $R$ is given by 
\begin{eqnarray}
\label{r-Equation}
R=\frac{1}{N}\cdot \sum_{i=1}^{r} \frac{2}{(\alpha_i,\alpha_i)} s_{i} \omega_i
\end{eqnarray}
Then for $i\ge 1$
$$\left [N (\partial_{z}+\frac{R}{z} +\Lambda), e_{i} \right ]=\frac{s_{i}}{z} e_{i} + [N\Lambda,e_i]$$
For $i=0$ one has
$$\left [N(\partial_{z}+\frac{R}{z} +\Lambda), e_{0} \right ]=N E_{-\theta}-\sum_{i=1}^{r} a_{i} s_{i} E_{-\theta} + [N\Lambda,e_0]=s_{0} E_{-\theta}+ [N\Lambda,e_0]$$
It follows that the derivation $\Phi^{-1} \circ D \circ \Phi$ agrees with the derivation $N(\partial_{z}+\frac{R}{z} + \Lambda)$ on a set of generators and hence the two derivations are equal. Hence
$$N(\partial_{z}+\frac{R}{z} +  \Lambda)(\Lambda_{j})=j \Lambda_{j-N}$$
Therefore, for the desired derivation $d$ in Equation (\ref{derivation-equation}) one can take 
$$d=\partial_{z}+\frac{R}{z} + \Lambda$$ 
See for example \cite{WAK} for a more in-depth discussion of these kind of derivations.

\subsection{Katz extension}
\label{Katz-extension-section}

In Section \ref{vacuum-section} we have shown in what manner the vacuum operator $\partial_{z} + \Lambda$ characterizes the Drinfeld-Sokolov vacuum $|0\rangle$.  In the Grassmannian formulation,  the hierarchy corresponds to certain subspaces of a $\mathbb{C}(\!(1/z)\!)$-vector space and in this sense the hierarchy lives on a punctured disc around $z=\infty$.  It is then natural to view the vacuum operator as a connection on this disc.  After embedding the disc into a sphere,  one can ask what the canonical Katz extension of the vacuum operator is.  We answer this question in the current section. 

Let us recall the notion of the Katz extension. Let $t$ be an indeterminate. A connection $\nabla$ on a formal punctured disc $\textrm{Spec }\mathbb{C}(\!(t)\!)$ is by definition a $\mathbb{C}$-linear endomorphism $\nabla$ of a finite-dimensional $\mathbb{C}(\!(t)\!)$-vector space $V$ that satisfies the Leibniz identity 
$$\nabla(f(t)\cdot v) =(\partial_{t} f(t))\cdot v+f \cdot \nabla(v)$$ 
for all $f$ in $\mathbb{C}(\!(t)\!)$ and $v$ in $V$. After choosing a basis, the connection corresponds to a linear differential operator 
$$\nabla=\partial_{t} + A$$
with $A$ in $\textrm{GL}_{n}\mathbb{C}(\!(t)\!)$ where $n=\textrm{dim}_{\mathbb{C}(\!(t)\!)} V$.

Starting with work of Levelt \cite{LEV} and Turrittin \cite{TUR} in the 1950's, it is known that connections on a formal disc possess a normal form resembling the Jordan canonical form. In general, this normal form is only achieved after extending scalars to a finite extension of $\mathbb{C}(\!(t)\!)$: Let $r\ge 1$ be an integer and let $s=t^{1/r}$. The pull-back of a connection $(V,\nabla)$ over $\mathbb{C}(\!(t)\!)$ to a connection over $\mathbb{C}(\!(s)\!)$ has underlying $\mathbb{C}(\!(s)\!)$-vector space simply $V \otimes_{\mathbb{C}(\!(t)\!)} \mathbb{C}(\!(s)\!)$. The analogue of $\nabla$ is $[r]^{*}\nabla$ which satisfies in particular
$$[r]^{*}\nabla (v \otimes 1)= v \otimes r s^{r-1} $$
The results of Levelt and Turrittin imply that for a suitable $r$, the connection $[r]^{*}\nabla$ is isomorphic to a successive extension of one-dimensional connections in an essentially unique manner.

For connections on the sphere the analogous statement fails in general. The idea of the canonical Katz extension is to find nonetheless a special connection on the sphere that restricts to $\nabla$ while simultaneously exhibiting behavior of Levelt-Turrittin type. Let us give a precise definition in the case where after a suitable pull-back $\nabla$ is the direct sum of one-dimensional connections. This will be the case of interest in our application to vacuum operators of Drinfeld-Sokolov hierarchies. 

Given a connection $\nabla=\partial_{t} + A$ on the disc $\textrm{Spec } \mathbb{C}(\!(t)\!)$, an extension to the sphere is simply $\nabla'=\partial_{t} +B$ with $B$ in $\textrm{GL}_{n}\mathbb{C}[t,t^{-1}]$ such that $\nabla$ and $\nabla'$ are isomorphic on $\textrm{Spec } \mathbb{C}(\!(t)\!)$: There is $g$ in $\textrm{GL}_{n}\mathbb{C}(\!(t)\!)$ such that
$$g\partial_{t}g^{-1} + gAg^{-1}=B$$
Given such an extension $\nabla'$ one can consider its local behavior around the disc $t=\infty$: Under the coordinate change $z=1/t$ the connection $\partial_{t} +B$ changes as
$$\partial_{t}+B \leadsto \partial_{z} -\frac{1}{z^{2}}\cdot B$$
If there is a gauge transformation in $\textrm{GL}_{n}\mathbb{C}(\!(z)\!)$ that puts this operator into the form 
$$\partial_{z} + C_{-1} \frac{1}{z}  + C_{0} + C_{1} z+ \cdots $$
with each $C_i$ constant, then we call the connection regular singular around $z=0$. A canonical Katz extension $\nabla^{\textrm{can}}$ of $\nabla$ is by definition an extension of $\nabla$ to the sphere that is regular singular around $z=0$ and that after a finite pull-back is a direct sum of one-dimensional connections.

As before, we now choose a faithful complex representation $\xi : \mathfrak g \rightarrow \mathfrak g\mathfrak l(V)$ and we usually simply write $x$ for $\xi(x)$. If now $A$ is in $\mathfrak g[z,z^{-1}]$ then via $\xi$ one can view $\partial_{z} + A$ as a connection on the formal disc around $z=\infty$. In the case where $A$ is a suitable element in a Heisenberg algebra we calculate in Theorem \ref{Katz-extension-theorem} the Katz extension.

\begin{proof}(of Theorem \ref{Katz-extension-theorem})

Let $[w]$ be a regular primitive conjugacy class in the Weyl group. Let $N$ denote the order of $w$ and fix a primitive $N$'th root of unity $\zeta$. Fix a Cartan algebra $\mathfrak h$ in $\mathfrak g$ and view the Weyl group via its reflection representation on the corresponding space of roots. Let us recall why the Kac coordinates associated to primitive Weyl group elements are uniquely determined. A conjugacy class $[w]$ is called primitive if in the reflection representation
$$\textrm{det}(1-w)=\textrm{det}(A)$$
where $A$ is the Cartan matrix of $\mathfrak g$. In particular, this determinant is non-zero and hence $1$ is not an eigenvalue of $w$. It follows that $w$ is an elliptic element. The desired uniqueness is then implied for example by \cite{AHN} (Lemma 1.1.3): Suppose for $i=1,2$ there is a lift $\sigma_{i}=\textrm{exp}(\textrm{ad }g_i)$ of $w$ to an inner automorphism of $\mathfrak g$. Then 
$$\sigma_2 = \textrm{exp}(\textrm{ad }h) \sigma_1$$
for some $h$ in $\mathfrak h$. Since $w$ is elliptic, the kernel of the endomorphism of $\mathfrak h$ given by $t \mapsto t - w(t)$ is $0$. Hence the map is surjective and there exists $t$ in $\mathfrak h$ such that
$$t-w(t)=h$$
Then
\begin{eqnarray*}
\textrm{exp}(\textrm{ad }t) \sigma_{1} \textrm{exp}(-\textrm{ad }t)&=& \textrm{exp}(\textrm{ad }t) \textrm{Ad}_{\textrm{exp}(g_1) \textrm{exp}(-t)\textrm{exp}(-g_1) \textrm{exp}(g_1) } \\
&=&\textrm{exp}(\textrm{ad }t) \textrm{Ad}_{\textrm{exp}(w(-t))} \textrm{exp}(\textrm{ad }g_1) \\
&=&\textrm{exp}(\textrm{ad }(t-  w(t))) \textrm{exp}(\textrm{ad }g_1) \\
&=& \sigma_2
\end{eqnarray*}
Hence the conjugacy class of lifts of $w$ is well defined and the Kac coordinates are unique.

Using this uniqueness together with \cite{REE} (Proposition 2.2) it follows that the order $N$ of $w$ is also equal to $\sum_{i=0}^{r} a_i s_i$, where the $s_i$'s are the Kac coordinates of $\sigma$ and as before the $a_i$'s are the Kac labels of $\mathfrak g$. Hence our notation is consistent with Equation (\ref{N-equation}). By definition, $[w]$ being regular means that there is an eigenvector $\lambda$ in the Cartan algebra $\mathfrak h$ such that $(\lambda,\alpha)\ne 0$ for all roots $\alpha$. Put differently:
\begin{eqnarray}
\label{Cartan-property}
\textrm{Cent}_{\mathfrak g}(\lambda)=\mathfrak h
\end{eqnarray}
By \cite{SPR} (Theorem 4.2) the eigenvalue corresponding to $\lambda$ is a root of unity of the same order as $w$. Furthermore, by loc. cit. (Proposition 4.7), if $i$ is co-prime to the order of $w$ then $w^{i}$ is conjugate to $w$. Hence, possibly after moving within the conjugacy class $[w]$, we can assume that the eigenvalue is our chosen primitive root of unity $\zeta$. Let $\textbf{s}=(s_0,\cdots,s_r)$ be the Kac coordinates of $[w]$. As in Equation (\ref{r-Equation}) define
$$r_{\textbf{s}}=\frac{1}{N} \cdot \sum_{i=1}^{r} \frac{2}{(\alpha_i,\alpha_i)} s_{i} \omega_i$$
Suppose now that the following holds:
\begin{condition}
\label{first-condition}
$N-1$  is the largest eigenvalue of $ \textrm{ad } N r_{\textbf{s}} $ on $\mathfrak g$
\end{condition}
Previously, we denoted by $\mathfrak g_{[i]}$ the subspace of $\mathfrak g$ on which $\sigma$ acts by $\zeta^{i}$. Now define a $\mathbb{Z}$-gradation on $\mathfrak g$ by putting for all $1\le i\le r$
$$\textrm{deg } e_{i} = s_{i} \;\; ,\;\; \textrm{deg } f_{i} = -s_{i} \;\; ,\;\; \textrm{deg } h_{i} = 0$$    
and denote by $\mathfrak g_{i}$ the degree $i$ space. Since for all $i$ one has $\dim \mathfrak g_{i}= \dim \mathfrak g_{-i}$ it follows from Condition \ref{first-condition} that $1-N$ is the smallest eigenvalue of $ \textrm{ad } N r_{\textbf{s}} $ on $\mathfrak g$. The only two integers in $[1-N,N-1]$ congruent to $1$ modulo $N$ are $1$ and $1-N$ and therefore
$$\lambda = a_{1}+a_{1-N}$$ 
for $a_{i}$ in $\mathfrak g_{i}$. Similar to a previous calculation one sees that for $$\Lambda = a_{1} + z \cdot a_{1-N}$$ 
in $\widehat{\mathfrak g}$ one has $\Phi(\Lambda)=\lambda\cdot z$. 
We want to show that there is $\nu$ in $\mathfrak g$ solving the equation
\begin{eqnarray}
\label{crucial-equation}
[\nu,\Lambda]=r_{\textbf{s}}
\end{eqnarray}
The existence of $\nu$ follows from a case by case analysis (with respect to the Weyl group conjugacy class $[w]$) from results due to various authors and collected by Delduc and Feher in \cite{DF}. In the following we summarize the relevant arguments. The idea is that no matter what $a_1$ is, for every element $a_{0}$ in $\mathfrak g_{0}$ (and hence in particular for $a_0=r_{\textbf{s}}$), there is $a_{-1}$ in $\mathfrak g_{-1}$ such that $[a_{-1},a_{1}]=a_0$.

Since $\lambda$ is regular semi-simple it follows that the kernel of $\textrm{ad } (a_{1}+a_{1-N})$ has $0$ intersection with $\mathfrak g_{<0}$ since the elements of the latter space are nilpotent. Since $1-N$ is the smallest eigenvalue this implies that the above kernel is in fact equal to the kernel of $\textrm{ad } a_1$. Hence $[a_{1},\mathfrak g_{-1}]$ and $\mathfrak g_{-1}$ have the same dimension. Suppose one knows:
\begin{condition}
\label{second-condition}
$\dim \mathfrak g_{-1} = \dim \mathfrak g_{0}$
\end{condition}

If Condition 2 holds, since $[a_{1},\mathfrak g_{-1}] \subseteq \mathfrak g_{0}$, it follows that $
 [a_{1},\mathfrak g_{-1}]$ is all of $\mathfrak g_{0}$. In particular, there exists $\nu
 $ in $\mathfrak g_{-1}$ such that $[\nu, a_{1}]=r_{\textbf{s}}$. Since $[a_{1-N},\mathfrak 
 g_{<0}]=0$ it follows that that
$$[\nu,a_{1}+z\cdot a_{1-N}]=r_{\textbf{s}}$$
and Equation (\ref{crucial-equation}) does indeed have a solution.
 
The fact that for all primitive regular conjugacy classes Condition \ref{first-condition} and Condition \ref{second-condition} do hold is part of what is shown in \cite{DF} (Appendix A), using the results of various authors. Since not all details are given in loc. cit. we give some illustrative sample calculations below.

\begin{example} Consider the conjugacy class of a Coxeter element, hence $N=h$ is th Coxeter number. Kostant showed in \cite{KOS} that the Kac coordinates are $\textbf{s}=(1,\cdots,1)$.
Write the highest root $\alpha_{\textrm{max}}$ as $\alpha_{\textrm{max}} = \sum_{i=1}^{r} a_{i} \alpha_{i}$. Then $\sum_{i=1}^{r} a_{i} = h-1$ and therefore
$$\alpha_{\textrm{max}}\left (hr_{\textbf{s}} \right )=\sum_{i=1}^{r}a_{i}=h-1$$
and hence Condition \ref{first-condition} holds. In terms of a basis $\{e_{i},f_{i},h_{i}\}_{1\le i \le r}$ of $\mathfrak g$ satisfying the Serre relations one sees that $\mathfrak g_{0}$ is the span of the $h_{i}$'s and $\mathfrak g_{-1}$ is the span of 
the $f_{i}$'s. Hence 
$$\dim \mathfrak g_{-1} =r= \dim \mathfrak g_{0}$$
and Condition \ref{second-condition} holds.
\end{example}

\begin{example}
Consider the conjugacy class $\textrm{E}_6(a_1)$. The order of the Weyl group element is $9$, see for example \cite{BOU} (Table I) for orders of all primitive Weyl group elements. We follow the ordering of simple roots as in \cite{BOURBAKI}:

\begin{center}
\begin{tikzpicture}
\draw (7.0,-2) circle (0.05cm);
\draw(6.9,-2) -- (6.10,-2);
\draw (6.0,-2) circle (0.05cm);
\draw(7.9,-2) -- (7.10,-2);
\draw (8.0,-2) circle (0.05cm);
\draw(8.9,-2) -- (8.10,-2);
\draw (8.0,-1) circle (0.05cm);
\draw(9.9,-2) -- (9.10,-2);
\draw (9.0,-2) circle (0.05cm);
\draw(8.0,-1.9) -- (8.0,-1.1);
\draw (10.0,-2) circle (0.05cm);
\node at (6,-2.5) {$1$};
\node at (7,-2.5) {$3$};
\node at (8,-2.5) {$4$};
\node at (9,-2.5) {$5$};
\node at (10,-2.5) {$6$};
\node at (8.5,-1) {$2$};
\end{tikzpicture}
\end{center}

The Kac coordinates are $s_1=1$, $s_2=1$, $s_3=1$, $s_4=0$, $s_5=1$, $s_6=1$, see the work of Bouwknegt \cite{BOU}. The highest root has coordinates $(1,2,2,3,2,1)$ and therefore the largest eigenvalue of $\textrm{ad } r_{\textbf{s}}$ is
$$1\cdot 1 + 2\cdot 1 +2 \cdot 1 + 3 \cdot 0 + 2\cdot 1 + 1\cdot 1=8=9-1$$
and hence Condition 1 holds. As can be seen for example in \cite{GRA}, the coordinates with respect to the simple roots of the list of positive roots is
\begin{eqnarray*}
&&\boxed{(1,0,0,0,0,0)}, \boxed{(0,1,0,0,0,0)}, \boxed{(0,0,1,0,0,0)},(0,0,0,1,0,0), \boxed{(0,0,0,0,1,0)}, \boxed{(0,0,0,0,0,1)}, (1,0,1,0,0,0) \\
&& \boxed{(0,1,0,1,0,0)}, \boxed{(0,0,1,1,0,0)}, \boxed{(0,0,0,1,1,0)},(0,0,0,0,1,1), (1,0,1,1,0,0), (0,1,1,1,0,0), (0,1,0,1,1,0)\\
&& (0,0,1,1,1,0), (0,0,0,1,1,1), (1,1,1,1,0,0),(1,0,1,1,1,0), (0,1,1,1,1,0), (0,1,0,1,1,1), (0,0,1,1,1,1)\\
&&  (1,1,1,1,1,0), (1,0,1,1,1,1), (0,1,1,2,1,0), (0,1,1,1,1,1), (1,1,1,2,1,0), (1,1,1,1,1,1),(0,1,1,2,1,1)\\
&&(1,1,2,2,1,0),(1,1,1,2,1,1),(0,1,1,2,2,1),(1,1,2,2,1,1),(1,1,1,2,2,1), (1,1,2,2,2,1),(1,1,2,3,2,1)\\
&&(1,2,2,3,2,1)
\end{eqnarray*} 
We have boxed the roots of $\textbf{s}$-degree $1$.
It follows that
$$\textrm{dim } \mathfrak g_{-1}= \textrm{dim } \mathfrak g_{1}=8$$
and similarly
$$\textrm{dim } \mathfrak g_{0} = \textrm{rank }\mathfrak g + 2\cdot 1=8$$
and hence Condition 2 holds.
\end{example}
  
\begin{example} 
Consider the conjugacy class $\textrm{D}_{2n}(a_{n-1})$. For $n \le 4$ the Kac coordinates are calculated by Bouwknegt in \cite{BOU}. We now calculate the Kac coordinates for all $n$. The order of the Weyl group element is $2n$. By \cite{REE} (Proposition 2.2) all Kac coordinates are either $0$ or $1$ and $N=\sum_{i=0}^{r} a_i s_i =2n$. Since the highest root is $\alpha_1+2(\alpha_2+\cdots +\alpha_{2n-1})+\alpha_{2n-1}+\alpha_{2n}$ this implies
\begin{eqnarray}
\label{Kac-equation}
s_{0}+s_1+2(s_2+\cdots s_{2n-2}) +s_{2n-1} + s_{2n} = 2n 
\end{eqnarray}
We will show that 
\begin{eqnarray}
\label{type-D-coordinates}
\textbf{s}=(1,1,0,1,0,1,0,1,0,1 \cdots, 0,1,1)
\end{eqnarray}
In \cite{REE} a general formula for the dimension of $\mathfrak g_0$ is given. In the current situation it gives
$$\textrm{dim } \mathfrak g_{0} = \frac{hr}{2n}=\frac{(4n-2)\cdot (2n)}{2n}=4n-2$$
Our strategy is to show that unless $\textbf{s}$ is as in Equation (\ref{type-D-coordinates}), $\textrm{dim }\mathfrak g_{0}$ is bigger than $4n-2$. 

Let $t_{1},t_{2},\cdots$ be the lengths of intervals consisting of $0$'s in the sequence of Kac coordinates. So for example if $\textbf{s}= (1,1,0,0,1,1,0,1,1)$ then $t_{1}=2$ and $t_{2}=1$. The set of positive roots includes
$\sum_{i\le k <j} \alpha_{k}$ with $1\le i < j \le 2n$. Let us count how many of these are of $\textbf{s}$-degree $0$:
All $\alpha_k$'s need to be part of the same interval of $0$'s. Therefore the number of degree $0$ positive roots of this type is
$$\sum_{j} \frac{t_j (t_j+1)}{2}$$

Since the Cartan algebra is contained in $\mathfrak g_0$ and there is an equal number of positive and negative roots of $\textbf{s}$-degree $0$ it follows that
$$4n-2=\textrm{dim }\mathfrak g_{0} \ge 2n+2\cdot  \sum_{j} \frac{t_{j}(t_{j}+1)}{2} $$
Let $K_{0}=\sum_{j} t_{j}$ denote the number of Kac coordinates equal to $0$. Then
\begin{eqnarray}
\label{constraint-equation}
2n-2 \ge  K_{0}+ \sum_{j} t_{j}^{2} 
\end{eqnarray}
One sees from Equation (\ref{Kac-equation}) that there are three possibilities for $K_{0}$, namely $n-1$, $n$, and $n+1$. One also has $\sum_j t_{j}^2 \ge \sum_{j} t_j =K_0$ with equality if and only if $t_j=1$ for all $j$. Hence it follows from Equation (\ref{constraint-equation}) that $K_0=n-1$ and $t_j=1$ for all $j$.  From $K_0=n-1$ it follows from Equation (\ref{Kac-equation}) that $s_0=s_1=s_{2n-1}=s_{2n}=1$. But then the only way to avoid $t_{j}>1$ for some $j$ is to have
$$\textbf{s}=(1,1,0,1,0,1,0,1,\cdots,0,1,1)$$
as desired.

Since $s_0=1$ it follows that Condition 1 holds. We have already seen that $\textrm{dim } \mathfrak g_{0}=4n-2$, let us now calculate $\textrm{dim } \mathfrak g_{-1} = \textrm{dim }\mathfrak g_{1}$. Consider first the positive roots of the form
$\sum_{i\le k <j} \alpha_{k}$ with $1\le i < j \le 2n$. Out of these, for $\textbf{s}$-degree $1$ one obtains $n$ roots of height $1$, $2n-2$ roots of height $2$, $n-2$ roots of height $3$. There are two more types of positive roots:
\begin{enumerate}[(i)]
\item 
$\alpha_{2n}+\sum_{i\le k \le 2n-2} \alpha_{k}$ with $1\le i \le 2n-1$ (where for $i=2n-1$ the summation simply is ignored)
\item 
$\sum_{i\le k <j} \alpha_k +2 \sum_{j \le k \le 2n-2} \alpha_{k} +\alpha_{2n-1}+\alpha_{2n}$ with $1\le i < j \le 2n$
\end{enumerate}
The first case yields the two roots $\alpha_{2n}$ and $\alpha_{2n-2}+\alpha_{2n}$ of $\textbf{s}$-degree $1$ and the second case yields nothing. In total one obtains
$$\textrm{dim }\mathfrak g_{1}=n+(2n-2)+(n-2)+2=4n-2$$
and Condition 2 holds.
\end{example}

\begin{example}
In the case of $\textrm{F}_{4}(a_{1})$ the Ansatz of the previous example does not single out the Kac coordinates uniquely: The order of the Weyl group element is $6$, and the highest root is given by $2\alpha_1+3\alpha_2 +4 \alpha_3 +2\alpha_4$. Hence the Kac labels are $a_0=1$, $a_1=2$, $a_2=3$, $a_3=4$, $a_4=2$. Since all $s_i$'s are $1$ or $0$ it follows that the quintuples $(1,1,1,0,0)$ or $(1,0,1,0,1)$ or $(0,1,0,1,0)$ or $(0,0,0,1,1)$ are exactly the possible choices of $(s_0,\cdots,s_4)$ satisfying $\sum_{i=0}^{4} a_{i} s_{i}=6$.  One has
$$\textrm{dim } \mathfrak g_{0} = \frac{hr}{6}=8$$
From the list of positive roots, see for example \cite{GRA}, one directly calculates that exactly for $(1,0,1,0,1)$ or $(0,1,0,1,0)$ one has
$$\textrm{dim } \mathfrak g_{0} = 4+2\cdot 2$$
So the numerics do not distinguish between those two possibilities for the Kac coordinates and we refer to \cite{DF} (Appendix A) for details that Conditions 1 and 2 do indeed hold for $\textrm{F}_{4}(a_1)$. Assuming this, we can deduce the Kac coordinates: Since $a_{0}=1$ in order for Condition 1 to hold one needs $s_{0}=1$. Hence the Kac coordinates of $\textrm{F}_{4}(a_1)$ are $(1,0,1,0,1)$.
\end{example}

By Equation (\ref{Cartan-property}) and since $\lambda$ is semi-simple, it follows via the isomorphism $\Phi$ that
\begin{eqnarray}
\textrm{ker } \textrm{ad } \Lambda =\mathcal H_{[w]}
\end{eqnarray}
as well as
\begin{eqnarray}
\label{decomposition-equation}
\widehat{\mathfrak g}= \textrm{ker } \textrm{ad } \Lambda \oplus \textrm{Im } 
\textrm{ad } \Lambda
\end{eqnarray}
Since
$$[r_{\textbf{s}},a_1]=\frac{1}{N} \cdot a_1 \;\;\; ,\;\;\; [r_{\textbf{s}},a_{1-N}]= \frac{1-N}{N} \cdot a_{1-N}$$ 
it follows for $\Lambda=a_1 + z \cdot a_{1-N}$ that
$$\left [\partial_{z} + \frac{r_{\textbf{s}}}{z}, \Lambda \right ] = \frac{1}{Nz} \cdot \Lambda $$
More generally, it follows from our previous arguments that for the basis elements $\Lambda_j$ of $\mathcal H_{[w]}$ given in Equation (\ref{Lambda-definition}) one has
\begin{eqnarray}
\label{derivation-Heisenberg-equation}
\left[\partial_{z} + \frac{r_{\textbf{s}}}{z}, \Lambda_j  \right ]= \frac{j}{N} \cdot \Lambda_{j-N}
\end{eqnarray}

The next part of the proof follows closely the arguments by Cafasso and Wu in \cite{CAW1} (Theorem 3.11). Let $\gamma = \exp(Y)$ where $Y$ in $\widehat{\mathfrak g}$ has a decomposition with respect to the $\textbf{s}$-gradation as $Y=\sum_{i<0} Y_{i}$.  We want to show that there is such $Y$ such that 
\begin{eqnarray}
\label{degree-by-degree-equation}
\exp( \textrm{ad }Y)\left(\partial_{z}+ \frac{r_{\textbf{s}}}{z} + \Lambda \right ) =\partial_{z} +\frac{r_{\textbf{s}}}{z} +\Lambda +[Y, \partial_{z}+\frac{r_{\textbf{s}}}{z}+ \Lambda] +\frac{[Y,[Y, \partial_{z}+\frac{r_{\textbf{s}}}{z} +\Lambda] ]}{2!} + \cdots = \partial_{z} + \Lambda
\end{eqnarray}
We make the Ansatz that $Y_{j}=0$ if $j$ is not divisible by $N+1$ and show recursively with respect to $\textbf{s}$-degree that Equation (\ref{degree-by-degree-equation}) can be solved. Note that under our Ansatz one has for any indices $i_{1},\cdots,i_{k}$ that
$$\textrm{deg}_{\textbf{s}} \; [Y_{i_{1}},[Y_{i_{2}},\cdots, [Y_{i_{k}},\partial_{z} + \frac{r_{\textbf{s}}}{z}]\cdots ] \equiv - N \equiv  1 \mod N+1$$
as well as
$$\textrm{deg}_{\textbf{s}} \; [Y_{i_{1}},[Y_{i_{2}},\cdots, [Y_{i_{k}},\Lambda ]\cdots ] \equiv  1 \mod N+1$$
Hence to prove Equation (\ref{degree-by-degree-equation}) one has to check it for $\textbf{s}$-degree $-i\cdot(N+1)+1$ with $i\ge 0$. For $i=0$ the equation simply is $\Lambda=\Lambda$. The first non-trivial equation corresponds to $i=1$ and hence is in $\textbf{s}$-degree $-N$: 
\begin{eqnarray}
\label{first-equation}
\left [Y_{-N-1},\Lambda \right ]=-\frac{r_{\textbf{s}}}{z}
\end{eqnarray}
This has a solution by Equation (\ref{crucial-equation}): There is $\nu$ in $\mathfrak g_{-1}$ such that $[\nu,\Lambda]=r_{\textbf{s}}$ and hence one can take 
$$Y_{-N-1}=-\frac{\nu}{z}$$
In degree $-i\cdot (N+1)+1$ one needs to solve
\begin{eqnarray}
\label{general-equation}
[Y_{-i(N+1)},\Lambda]=-[Y_{-(i-1)(N+1)},\partial_{z}+ \frac{r_{\textbf{s}}}{z} ] -\frac{1}{2}[Y_{-(i-1)(N+1)},[Y_{-(N+1)},\Lambda]]-\frac{1}{2}[Y_{-(N+1)},[Y_{-(i-1)(N+1)},\Lambda]] + C 
\end{eqnarray}
where $C$ depends only on $Y_{-j(N+1)}$ for $j \le i-2$. We now show that there exists $Y$ solving these equations. 

By Equation (\ref{decomposition-equation}) every element $x$ of $\widehat{\mathfrak g}$ can be written uniquely as a sum $x_{1}+x_{2}$ with $x_{1}$ in the Heisenberg algebra $\mathcal H_{[w]}$ and $x_{2}$ in the image of the adjoint action of $\Lambda$. We call $x_{1}$ the Heisenberg part of $x$. Suppose now Equation (\ref{general-equation}) has been solved up to a certain $i-1$. To prove the existence of $Y_{-i(N+1)}$ with the desired properties we will show that after adding a suitable element of the Heisenberg algebra to $Y_{-(i-1)(N+1)}$ the Heisenberg part of the right-hand side of Equation (\ref{general-equation}) can be made to equal $0$. Let $H$ denote the Heisenberg part of
$$ -\frac{1}{2}[Y_{-(i-1)(N+1)},[Y_{-(N+1)},\Lambda]]-\frac{1}{2}[Y_{-(N+1)},[Y_{-(i-1)(N+1)},\Lambda]] + C $$
Then 
$$\tilde H := H\cdot \frac{N}{(i-1)(N+1)}\cdot z$$ 
is again an element of the Heisenberg algebra and of degree $-(i-1)(N+1)$. We now modify $Y_{-(i-1)(N+1)}$ by subtracting $\tilde H$. Since $\mathcal H_{[w]}$ is commutative this does not affect the validity of the equations in degree $-j(N+1)+1$ for $j \le i-2$. Using Equation (\ref{derivation-Heisenberg-equation}) one obtains 
$$\left [\partial_{z}+ \frac{r_{\textbf{s}}}{z},\tilde H \right ]=\frac{-(i-1)(N+1)}{N} \cdot \tilde H \cdot \frac{1}{z}=-H$$ 
Hence, the new Heisenberg part of $-[Y_{-(i-1)(N+1)},\partial_{z}+ \frac{r_{\textbf{s}}}{z} ]$ is obtained by subtracting $H$. Also, the Heisenberg part of $[Y_{-(i-1)(N+1)},[Y_{-(N+1)},\Lambda]]$ does not change: Since $\mathcal H_{[w]}$ is commutative, subtracting an element $\tilde H$ of the Heisenberg algebra from $Y_{-(i-1)(N+1)}$ changes $[Y_{-(i-1)(N+1)},[Y_{-(N+1)},\Lambda]]$ by 
$$-[\tilde H, [Y_{-(i-1)(N+1)},\Lambda]]=[\Lambda,[\tilde H,Y_{-(i-1)(N+1)}]]$$ 
This is in the image of $\textrm{ad } \Lambda$ and hence has $0$ Heisenberg part. It follows that the Heisenberg part of the right-hand side of Equation (\ref{general-equation}) can be made $0$. Hence there is $Y_{-i(N+1)}$ solving Equation (\ref{general-equation}) and hence by induction there exists a solution $Y$ to Equation (\ref{degree-by-degree-equation}). 

Recall that we have fixed a faithful representation $\xi : \mathfrak g \rightarrow \mathfrak g \mathfrak l(V)$ of $\mathfrak g$. In order to prove Theorem \ref{Katz-extension-theorem} it now suffices to show that, possibly after a finite pull-back, the connection 
$$\partial_{z} + \frac{r_{\textbf{s}}}{z} + \Lambda$$
can be diagonalized by an element of $\textrm{GL}_{\textrm{dim } \xi}(\mathbb{C}[z,z^{-1}])$.  Consider the weight space decomposition $V=\bigoplus_{\nu} V_{\nu}$. The eigenvalues of $\textrm{ad }r_{\textbf{s}}$ are integers after multiplication by  $N$. It follows there is $i$ with $N|i$ such that $r_{\textbf{s}}$ acts on each $V_{\nu}$ as multiplication by an element $a_{\nu}=b_{\nu}/i$ in $\ZZ/i$. For $w=z^{1/i}$ one obtains
$$ [i]^{*} \left [\partial_{z} + \frac{r_{\textbf{s}}}{z} + \Lambda \right ]\cong \partial_{w} + \frac{i\cdot r_{\textbf{s}}}{w} + i w^{i-1}(a_{1}+a_{1-N} w^{i})$$
Now consider the gauge transformation $g$ that multiplies $V_{\nu}$ by $z^{a_{\nu}}=w^{b_{\nu}}$. Then
$g \partial_{w}(g^{-1})$ acts on $V_{\nu}$ as $-b_{\nu}/w$ and hence
\begin{eqnarray}
\label{gauge-equation}
g \partial_{w}(g^{-1})=-\frac{i\cdot r_{\textbf{s}}}{w}
\end{eqnarray}
Furthermore, an element $E_{\alpha}$ in the root space of $\alpha$ takes $V_{\nu}$ to $V_{\nu + \alpha}$ and hence on each $V_{\nu}$ one has
\begin{eqnarray}
\label{second-gauge-equation}
g E_{\alpha} g^{-1} = w^{i\cdot ((\nu+\alpha)(r_{\textbf{s}} )- \nu(r_{\textbf{s}}))} E_{\alpha}=w^{i\cdot \alpha(r_{\textbf{s}})} E_{\alpha}
\end{eqnarray}
Using Equations (\ref{gauge-equation}) and (\ref{second-gauge-equation}) it follows that
\begin{eqnarray*}
\label{pull-back-equation}
g\left ( [i]^{*} \left [\partial_{z} + \frac{r_{\textbf{s}}}{z} + \Lambda \right ]  \right )  g^{-1}  & \cong & \partial_{w} -\frac{i\cdot r_{\textbf{s}}}{w}+\frac{i\cdot r_{\textbf{s}}}{w}+ i w^{i-1}(a_{1} w^{i/N} +  a_{1-N} w^{i+i(1-N)/N})\\[10pt]
&=& \partial_{w}  + i w^{\frac{i}{N}+i-1} (a_1+ a_{1-N}) 
\end{eqnarray*}
Since $N\ge i$ this connection is regular at $0$ and since $\lambda=a_{1}+a_{1-N}$ is by assumption regular semi-simple it follows that the connection can be diagonalized by a constant gauge transformation. This completes the proof of Theorem \ref{Katz-extension-theorem}.
\end{proof}

\begin{remark}
A priori, Theorem \ref{Katz-extension-theorem} does not rule out that $\partial_{z}+\Lambda$ is also its own Katz extension since it could be isomorphic to $\partial_{z}+r_{\textbf{s}}/z +\Lambda$ on the sphere. In particular, the two connections would be isomorphic on a formal disc around $0$. However, in general this does not hold as we now show. Let us consider the example of $\textrm{E}_6(a_1)$. Let $A^{-1}$ denote the inverse of the $\mathfrak e_6$ Cartan matrix. One obtains 
\begin{eqnarray*}
r_{\textbf{s}}&=&(1,1,1,0,1,1)^{\textrm{T}} A^{-1} (h_1,\cdots, h_6)^{\textrm{T}}\\[10pt]
&=&\frac{1}{9}(6h_1+8h_2+11h_3+15h_4+11h_5+ 6h_6)
\end{eqnarray*}
In one of the two faithful $27$-dimensional representations of $\mathfrak e_6$ the element $r_{\textbf{s}}$ can be conjugated to a diagonal matrix of the form
\begin{eqnarray}
\label{diagonal-matrix-equation}
r_{\textbf{s}}=\frac{1}{9} \cdot \textrm{diag}(6,5,4,4,3,3,2,2,2,1,1,1,0,0,0,-1,-1,-2,-1,-2,-2,-3,-3,-4,-4,-5,-6)
\end{eqnarray}
As described by Babbitt and Varadarajan \cite{BV} (p. 24) the semi-simple part of the monodromy of the connection $\partial_{z} + r_{\textbf{s}}/z + \Lambda$ is the exponential $\textrm{exp}(2\pi i r_{\textbf{s}})$. Since in Equation (\ref{diagonal-matrix-equation}) not all entries of $r_{\textbf{s}}$ are integers, it follows that the monodromy is non-trivial. However, the monodromy of $\partial_{z} + \Lambda$ is trivial around $0$ and it follows that $\partial_{z}+\Lambda$ is not its own Katz extension.
\end{remark}

\hspace{0.2in}

\textbf{Acknowledgements:} It is a great pleasure to thank Mattia Cafasso and Chao-Zhong Wu for very helpful exchanges. Thanks also to important corrections and remarks by the referee.

\end{document}